\title{Metropolis-Hastings Algorithms for Estimating Betweenness Centrality in Large Networks}
\titlerunning{Estimating Betweenness Centrality} %optional, in case that the title is too long; the running title should fit into the top page column
\author[1]{Mostafa Haghir Chehreghani}
\author[2]{Talel Abdessalem}
\author[3]{Albert Bifet}
\affil[1]{LTCI, T\'el\'ecom ParisTech, Universit\'e Paris-Saclay, Paris, France\\
  \texttt{mostafa.chehreghani@gmail.com}}
\affil[2]{LTCI, T\'el\'ecom ParisTech, Universit\'e Paris-Saclay, Paris, France\\
  \texttt{talel.abdessalem@telecom-paristech.fr}}
\affil[3]{LTCI, T\'el\'ecom ParisTech, Universit\'e Paris-Saclay, Paris, France\\
  \texttt{albert.bifet@telecom-paristech.fr}}  
\authorrunning{M. H. Chehreghani et. al.} %mandatory. First: Use abbreviated first/middle names. Second (only in severe cases): Use first author plus 'et. al.'
\subjclass{G.2.2 [Discrete Mathematics] Graph Theory, Graph algorithms}
\keywords{Network analysis, betweenness centrality, 
relative betweenness score,
approximate algorithm, Metropolis-Hastings algorithm, MCMC algorithm}% mandatory: Please provide 1-5 keywords
\begin{document}

\maketitle

\begin{abstract}
{\em Betweenness centrality} is an important index
widely used in different domains such as social networks, traffic networks
and the world wide web.
However, even for mid-size networks that have only a few hundreds thousands vertices,
it is computationally expensive to compute exact betweenness scores.
Therefore in recent years, several approximate algorithms have been developed.
In this paper, first
given a network $G$ and a vertex $r \in V(G)$,
we propose a Metropolis-Hastings MCMC algorithm that samples from the
space $V(G)$ and estimates betweenness score of $r$.
The stationary distribution of our MCMC sampler is the optimal sampling
proposed for betweenness centrality estimation.
We show that our MCMC sampler provides an 
$(\epsilon,\delta)$-approximation, where the number of
required samples depends on the position of $r$ in $G$
and in many cases, it is a constant.
Then,
given a network $G$ and a set $R \subset V(G)$,
we present a Metropolis-Hastings MCMC sampler that samples from the
joint space $R$ and $V(G)$ and estimates relative betweenness scores of the vertices in $R$.
We show that for any pair $r_i, r_j \in R$,
the ratio of the expected values of the estimated relative betweenness scores
of $r_i$ and $r_j$ respect to each other is equal to
the ratio of their betweenness scores.
We also show that our joint-space MCMC sampler provides an 
$(\epsilon,\delta)$-approximation
of the relative betweenness score of $r_i$ respect to $r_j$,
where the number of required samples depends on the position of $r_j$ in $G$
and in many cases, it is a constant.
% Based on our MCMC sampler, we propose an efficient and accurate algorithm for estimating
% betweenness centrality ratios of vertices in $R$.
% We perform extensive experiments and show that in most cases,
% our algorithm outperforms
% most efficient existing methods.
\end{abstract}

\section{Introduction}
\label{sec:introduction}

\textit{Centrality} is a structural property of vertices (or edges) in the network
that quantifies their relative importance.
For example, it determines the importance of a person within a social network,
or a road within a road network.
Freeman \cite{jrnl:Freeman} introduced and defined 
\textit{betweenness centrality} of a vertex
as the number of shortest paths
from all (source) vertices to all others that pass through that vertex.
He used it for measuring the control of a human over 
the communications among others in a social network \cite{jrnl:Freeman}.
Betweenness centrality is also used in some well-known algorithms for 
clustering and community detection in social and information networks \cite{jrnl:Girvan}.
% For example, the community detection algorithm proposed by \cite{jrnl:Girvan}
% iteratively partitions the network by finding edges with high betweenness centrality, removing them from the network
% and recomputing betweenness centrality of remaining edges.

Although there exist polynomial time and space algorithms for betweenness centrality estimation,
the algorithms are expensive in practice.
Currently, the most efficient existing exact method is Brandes's algorithm \cite{jrnl:Brandes}.
Time complexity of this algorithm is $O(nm)$ for unweighted graphs and
$O(nm+n^2 \log n)$ for weighted graphs with positive weights,
 where $n$ and $m$ are the number of vertices and the number of edges of the network, respectively.
This means exact betweenness centrality computation is not
applicable, even for mid-size networks.
% Note that these time complexities hold even if we want to compute betweenness score
% of only one or a few vertices in the network.
% Fortunately, 
However, there exist observations that may improve computation of betweenness scores
in practice.
\begin{itemize} 
\item 
First, in several applications it is sufficient to
compute betweenness score of only one or a few vertices.
For instance, this
index might be computed for only core vertices of communities in social/information
networks \cite{jrnl:Wang}
or for only hubs in communication networks.
Note that these vertices are not necessarily
those that have the highest betweenness scores.
Hence, algorithms that identify vertices
with the highest betweenness scores \cite{Riondato2016} are not applicable.
While exact computation of this index for one vertex is not easier than 
that for all vertices,
Chehreghani \cite{DBLP:journals/cj/Chehreghani14} and later Riondato and Kornaropoulos \cite{Riondato2016}
respectively showed that this index can be estimated more effectively for one
arbitrary vertex and for $k$ vertices that have the highest scores.
\item
Second, 
in practice,
instead of computing betweenness scores,
it is usually sufficient to
\textit{compute betweenness ratios} or
\textit{rank vertices} according to their betweenness scores \cite{Riondato2016}.
For example, Daly and Haahr \cite{DBLP:journals/tmc/DalyH09} exploited betweenness ratios for
finding routes that provide good
delivery performance and low delay in Mobile Ad hoc Networks.
In the Girvan and Newman's iterative algorithm 
for community detection \cite{jrnl:Girvan},
at each iteration it is required to detect 
the edge with the highest betweenness score 
and remove it from the network.
The other application discussed in \cite{DBLP:journals/corr/AgarwalSCI14}, 
is handling cascading failures.
% It has been shown that the failure of a vertex with a higher
% betweenness score may cause greater collapse of the network
% \cite{Stergiopoulos201534}.
% Therefore, failed vertices should be recovered in the order of their betweenness scores.
% This means it is required to order failed vertices based on their betweenness scores.
\end{itemize}

While the above mentioned observations do not yield a better algorithm
when exact betweenness scores are used,
they may improve approximate algorithms.
In particular, algorithms such as
\cite{DBLP:journals/cj/Chehreghani14} and \cite{RiondatoKDD20116}
have used the first observation to propose algorithms that can estimate
betweenness of a single vertex or a few vertices better than estimating it for all vertices.

In the current paper, we exploit both observations to design more effective
approximate algorithms.
In the first problem studied in this paper,
we assume that we are given a vertex $r \in V(G)$
and we want to estimate its betweenness score.
In the second problem,
we assume that we are given a set $R \subset V(G)$
and we want to estimate
the ratio of
betweenness scores of vertices in $R$.
% On the other hand, rather than trying to estimate betweenness scores,
% for any two vertices $r_i$ and $r_j$ in $R$,
% we estimate the \textit{relative betweenness score} of $r_i$ respect to $r_j$,
% such that the \textit{ratio} of the estimated relative scores of $r_i$ and $r_j$ respect to each other
% is equal to the ratio of their betweenness scores.
% These relative scores are sufficient
% to rank vertices based on betweenness scores.
The second problem is formally defined as follows:
given a graph $G$ and a set $R \subset V(G)$,
for any two vertices $r_i$ and $r_j$ in $R$, 
we want to estimate the \textit{relative betweenness score} of $r_i$ respect to $r_j$,
denoted by $ BC_{r_j}(r_i)$
(see Equation~\ref{eq:relativebetweenness} of Section~\ref{sec:mcmcsampler}
for the formal definition of relative betweenness score).
The ratio of our estimations of $BC_{r_j}(r_i)$ and $BC_{r_i}(r_j)$
is equal to the ratio of betweenness scores of $r_i$ and $r_j$.
% The definition of $BC_{r_j}(r_i)$ is presented in Section~\ref{sec:mcmcsampler}.
% The estimated relative betweenness scores are sufficient for ranking vertices.
To address these problems, we use a technique which is new in the context of network analysis,
but has already been used in statistical physics to estimate free energy differences \cite{bennett76}.
Our technique is based on a sampling method that belongs to a 
specific class of Monte Carlo Markov Chain (MCMC) algorithms, called Metropolis Hastings \cite{mengersen1996}.

In this paper, our key contributions are as follows.
\begin{itemize}
\item
Given a graph $G$ and a vertex $r \in V(G)$,
in order to estimate betweenness score of $r$,
we develop a MCMC sampler that samples from the space $V(G)$.
We show that our MCMC sampler can provide an 
$(\epsilon,\delta)$-approximation
of the betweenness score of $r$,
where $\epsilon \in \mathbb R^+$ and $\delta \in (0,1)$.
In particular, we prove that 
% there exist $1<\epsilon,\delta<0$ such that
$\mathbb P\left[ |BC(r)-\ddddot{BC}(r)| > \epsilon \right] \leq \delta ,$
where $BC(r)$ is the betweenness score
of $r$ and 
$\ddddot{BC}(r)$ is its estimated value.
Unlike existing work, our samples are non-iid and
the stationary distribution of our MCMC sampler
is the \textit{optimal probability distribution} proposed in \cite{DBLP:journals/cj/Chehreghani14}.
Moreover, the number of samples required by our sampler
depends on the position of the vertex in the graph and
in several cases, it is a constant.
\item 
Given a graph $G$ and a set $R \subset V(G)$,
in order to estimate relative betweenness scores
of all pairs of vertices in $R$,
we develop a MCMC sampler that samples from the joint space $R$ and $V(G)$.
This means each sample (state) in our MCMC sampler is a pair
$\left\langle \mathfrak r,\mathfrak v \right\rangle$,
where $\mathfrak r \in R$
and $\mathfrak v \in V(G)$.
% \item
For any two vertices $r_i,r_j \in R$, 
we show that our joint-space MCMC sampler provides an 
$(\epsilon,\delta)$-approximation
of the relative betweenness score of $r_i$ respect to $r_j$,
% where $\epsilon , \delta \in (0,1)$.
% In particular, we prove that 
% \[\mathbb P\left[ |BC_{r_j}(r_i)-\ddddot{BC}_{r_j}(r_i)| > \epsilon \right] \leq \delta ,\]
% where $BC_{r_j}(r_i)$ is the relative betweenness score
% of $r_i$ respect to $r_j$ and 
% $\ddddot{BC}_{r_j}(r_i)$ is its estimated value.
where the number of required samples depends only on the position of
$r_j$ in the graph and in several cases, it is a constant.
\end{itemize}

The rest of this paper is organized as follows. 
In Section~\ref{sec:preliminaries},
preliminaries and necessary definitions related to betweenness centrality
and MCMC algorithms are introduced.
A brief overview on related work is given in Section~\ref{sec:relatedwork}.
In Section~\ref{sec:algorithm}, we
present our MCMC samplers and their analysis.
% In Section~\ref{sec:experimentalresults}, we empirically evaluate our proposed algorithm 
% and show its high efficiency and accuracy, compared to the existing algorithms.
Finally, the paper is concluded in Section~\ref{sec:conclusion}.

\section{Preliminaries}
\label{sec:preliminaries}

In this section, we present definitions and notations widely used in the paper.
We assume that the reader is familiar with basic concepts in graph theory.
Throughout the paper, $G$ refers to a graph (network).
For simplicity, we assume that $G$ is a undirected, connected and loop-free graph without multi-edges.
Throughout the paper, we assume that $G$ is an unweighted graph,
unless it is explicitly mentioned that $G$ is weighted.
$V(G)$ and $E(G)$ refer to the set of vertices and the set of edges of $G$, respectively.
For a vertex $v \in V(G)$, by $G \setminus v$ we refer to the set of connected graphs
generated by removing $v$ from $G$.
% Throughout the paper, $n$ points to $|V(\mathbf G)|$ and $m$ points to $|E(\mathbf G)|$.
% For an edge $e =\{u,v\} \in E(G)$, $u$ and $v$ are two end-points of $e$. 
% We refer to $|V(G)|$ and $|E(G)|$ by $n$ and $m$, respectively.
In the following,
first in Section~\ref{section:betweenness} we introduce definitions and concepts related to betweenness centrality.
Then in Section~\ref{section:mcmc}, we present a brief overview of MCMC and Metropolis-Hastings algorithms. 

% A graph $G'$ is a \textit{subgraph} of $G$ if $V(G') \subseteq V(G)$ and $E(G') \subseteq E(G)$.
% $G'$ is an \textit{induced subgraph} of $G$, if $V(G') \subseteq V(G)$ and $E(G')$ contains all edges of $E(G)$ which have both end-points in $V(G')$.

\subsection{Betweenness centrality}
\label{section:betweenness}
A \textit{shortest path} (also called a \textit{geodesic path})
between two vertices $u,v \in V(G)$ is a path
whose size is minimum, among all paths between $u$ and $v$.
For two vertices $u,v \in V(G)$,
we use $d(u,v)$, to denote the size (the number of edges) of a shortest path connecting $u$ and $v$.
By definition, $d(u,v)=0$ and $d(u,v) = d(v,u)$.
% The diameter of $\mathbf G$, denoted $diam(\mathbf G)$, is defined as  ...
For $s,t \in V(G)$, $\sigma_{st}$ denotes the number of shortest paths between $s$ and $t$, and
% By definition, $\sigma_{ss}=1$.
$\sigma_{st}(v)$ denotes the number of shortest paths between $s$ and $t$ that also pass through $v$. 
% We have
% \[\sigma_{s}(v)=\sum_{t \in V(G) \setminus \{s,v\}}\sigma_{st}(v)\]
{\em Betweenness centrality} of a vertex $v$ is defined as:
% \begin{equation}
% BC(v)= \sum_{s,t \in V(G) \setminus \{v\}} \frac{\sigma_{st}(v)}{\sigma_{st}}.
% \end{equation}
\begin{equation}
BC(v)= \frac{1}{|V(G)| \cdot \left( |V(G)|-1 \right)} \sum_{s,t \in V(G) \setminus \{v\}} \frac{\sigma_{st}(v)}{\sigma_{st}}.
\end{equation}

A notion which is widely used for counting the number of shortest paths in a graph is the directed acyclic graph (DAG)
containing all shortest paths starting from a vertex $s$ (see e.g., \cite{jrnl:Brandes}).
In this paper, we refer to it as the \textit{shortest-path-DAG}, or \textit{SPD} in short, rooted at $s$.
For every vertex $s$ in graph $G$,
the \textit{SPD} rooted at $s$ is unique,
and it can be computed in $O(|E(G)|)$ time for unweighted graphs
and in $O\left(|E(G)|+|V(G)|\text{ log }|V(G)|\right)$ time
for weighted graphs with positive weights \cite{jrnl:Brandes}.
% In a SPD $D$ rooted at $s$, the \textit{depth} of $v \in V(D)$ , denoted $dep_D(v)$, is defined as $d_{D}(s,v)$.

Brandes \cite{jrnl:Brandes} introduced the notion of the \textit{dependency score}
of a vertex $s \in V(G)$ on a vertex $v \in V(G) \setminus \{s\}$, which is defined as:
\begin{equation}
\delta_{s\bullet}(v)=\sum_{t \in V(G) \setminus \{v,s\}} \delta_{st}(v) 
\end{equation}
where
% \begin{equation}
$\delta_{st}(v) = \frac {\sigma_{st}(v)}{\sigma_{st}}.$
% \end{equation}
We have:
% \begin{equation}
% BC(v)= \sum_{s \in V(G) \setminus \{v\}} \delta_{s\bullet}(v).
% \end{equation}
\begin{equation}
BC(v)= \frac{1}{|V(G)| \cdot \left( |V(G)|-1 \right)} \sum_{s \in V(G) \setminus \{v\}} \delta_{s\bullet}(v).
\end{equation}

Brandes \cite{jrnl:Brandes} showed that dependency scores of a
source vertex on different vertices in the network can be computed using a recursive relation,
defined as the following:
\begin{equation}
\delta_{s\bullet}(v)=\sum_{w:v \in P_s(w)} \frac{\sigma_{sv}}{\sigma_{sw}}(1+\delta_{s\bullet}(w)),
\label{eq:recursive}
\end{equation}
where $P_s(w)$ is defined as
$\{ u \in V(G): \{u,w\} \in E(G) \wedge d_{G}(s,v)=d_{G}(s,u)+1 \}.$
In other words, $P_s(w)$ contains the set of all parents (predecessors) of $w$ in the SPD rooted at $s$.
As mentioned in \cite{jrnl:Brandes}, given the SPD rooted at $s$,
for unweighted graphs and weighted graphs with positive weights,
dependency scores of $s$ on all other vertices can be computed in
$O(|E(G)|)$ time and $O(|V(G)|\log|V(G)| + |E(G)|)$ time, respectively.   
% The Brandes's algorithm \cite{jrnl:Brandes} computes the betweenness
% centrality of every vertex $v$ by solving one single-source shortest-paths problem, in the way that
% at the end of each iteration, the dependencies of $s$ on every vertex
% $v$ are added to the centrality score of $v$.
% This method gives an $O(nm)$ algorithm for computing the betweenness centrality of every single vertex. 

\subsection{MCMC and Metropolis-Hastings algorithms}
\label{section:mcmc}
In this section, we briefly review some basic notions and definitions in MCMC and Metropolis-Hasting algorithms.
For a comprehensive study, the interested reader can refer to e.g., \cite{brooks2011handbook}.
A Markov chain is a sequence of dependent random variables (states)
such that the probability distribution of each variable given the past variables depends
only on the last variable.
A Markov chain has \textit{stationary distribution}
if the conditional distribution of the $k+1^{\text{th}}$ state
given the $k^{\text{th}}$ state does not depend on $k$.

Let $\mathbb P[x]$ be a probability distribution defined on the random variable $x$.
When the function $f(x)$, which is proportional to the density of $\mathbb P[x]$,
can be efficiently computed,
the Metropolis-Hastings algorithm is used to draw samples
from $\mathbb P[x]$.
In a simple form (with symmetric proposal distribution),
the Metropolis-Hastings algorithm
first chooses an arbitrary initial state $x_0$.
Then, iteratively: 
% generates a candidate $x'$ using the proposal distribution $q(x'|x)$
% and with probability $\min\left\{ 1,\frac{f(x')}{f(x)} \right\}$ moves to state $x'$.
\begin{itemize}
\item Let $x$ be the current state.
It generates a candidate $x'$ using the \textit{proposal distribution} $q(x'|x)$.
\item It moves from $x$ to $x'$ with probability $\min\left\{ 1,\frac{f(x')}{f(x)} \right\}$.
\end{itemize}
The \textit{proposal distribution} $q(x'|x)$ 
defines the conditional probability of proposing a state $x'$ given the state $x$.
In the \textit{Independence Metropolis-Hastings algorithm},
$q(x'|x)$ is independent of $x$, i.e., $q(x'|x) = g(x')$.
A sufficient but not necessary condition 
for the existence of stationary distribution in a Metropolis-Hastings algorithm
is the \textit{detailed balance} condition.
It says that for each pair of states $x$ and $x'$, the probability of being in state $x$
and moving to state $x'$ is equal to the probability of being in state $x'$ and moving to state $x$.
By $TR(x'|x)$, we denote the probability of moving from state $x$ to state $x'$.

%%%%%%%%%%%%%%%%%%%%%%%%%%%%%%%%%%%%%%%%%%%%%%%%%%%%%%%%%%%%%%%%%%%%%%%%%%%%%%%%%%%%%%

 \section{Related work}
\label{sec:relatedwork}

% Centrality measures defined for the vertices of a network,
% are an important and essential tool for the analysis of social networks. 
% The widely used centrality indices include \textit{betweenness centrality} \cite{jrnl:Freeman},
% \textit{closeness centrality} \cite{jrnl:Sabidussi},
% \textit{degree centrality} \cite{book:Wasserman} and \textit{eigenvector centrality} \cite{jrnl:Bonacich}.

% \subsection{Betweenness centrality}

% Betweenness centrality,
% which is widely used as a precise estimation of the information flow
% controlled by a vertex in social and information networks,
% assumes that information flow is done
% through shortest paths \cite{jrnl:Yan}. 
Brandes \cite{jrnl:Brandes} introduced an efficient algorithm 
for computing betweenness centrality of a vertex,
which is performed in 
$O( |V(G)| |E(G)| )$ and $O(|V(G)| |E(G)| + |V(G)|^2 \log |V(G)|)$ 
times for unweighted and weighted networks with positive weights, respectively.
% His algorithm is an iterative method
% where at each iteration, for a vertex $s \in V(G)$ the following steps are performed:
% \begin{itemize}
% \item 
% Form the SPD rooted at $s$,
% \item
% Using Equation~\ref{eq:recursive}, recursively compute
% dependency scores of $s$ on all other vertices, and
% \item
% For each $v \in V(G) \setminus \{s\}$,
% add the computed dependency score of $s$ on $v$ to the betweenness score of $v$.
% \end{itemize}
% For unweighted graphs, each iteration takes $O(|E(G)|)$ time
% and for weighted graphs with positive weights, it takes $O(|E(G)|+|V(G)|\log |V(G)|)$ time.  
{\c{C}}ataly{\"{u}}rek et. al. \cite{DBLP:conf/sdm/CatalyurekKSS13} presented the
{\em compression} and {\em shattering} techniques to improve efficiency of Brandes's algorithm
for large graphs. During {\em compression}, vertices with known betweenness scores are removed from the graph and
during {\em shattering}, the graph is partitioned into smaller components.
Holme \cite{jrnl:Holme} showed that betweenness centrality of a vertex is
highly correlated with the fraction of time that the vertex is occupied
by the traffic of the network.
Barthelemy \cite{jrnl:Barthelemy} showed that many scale-free networks \cite{jrnl:Barabasi}
have a power-law distribution of betweenness centrality.
% Borgatti \cite{jrnl:Borgatti} proposed a dynamic model-based view of centrality that focuses
% on the outcomes of vertices in a graph. 
% He said that here the fundamental questions are:
% \textit{i}) how often does traffic flow through a vertex,
% and \textit{ii}) how long do things take to get to a vertex.

% \subsection{Variants.}
% Newman \cite{jrnl:Newman2} proposed \textit{random walk betweenness},
% that prefers shorter paths over the longer ones.
% % and provided closed-form equations for it.
% Goh et al \cite{jrnl:Goh} defined Load Centrality (LC), which is a variant of betweenness centrality.
% It assumes that traffic flows over shortest paths, but uses a different routing mechanism.
% Another set of variants is obtained by limiting the length of paths.
% It is based on the idea that very long
% paths are used rarely and should not contribute to betweenness of a vertex.
% Such measures are called \textit{$k$-betweenness centrality}, where $k$ is the maximum length of counted paths.
% Friedkin \cite{jrnl:Friedkin} proposed a 2-betweenness centrality measure.
% Similarly, Gould and Fernandez \cite{jrnl:Gould} developed
% \textit{brokerage measures} that are specific variants of 2-betweenness
% centrality.
% % A sophisticated variant would be a class of betweenness measures that count paths of all lengths, but weight them inversely in proportion to their length.
% There are also several variants of betweenness centrality that are used to
% determine the structural prominence of web pages \cite{jrnl:Kleinberg} and \cite{jrnl:Brin}.

\subsection{Generalization to sets}
Everett and Borgatti \cite{jrnl:Everett} defined \textit{group betweenness centrality}
as a natural extension of betweenness centrality for sets of vertices.
Group betweenness centrality of a set is defined as the
number of shortest paths passing through at least one of the vertices in the set \cite{jrnl:Everett}.
The other natural extension of betweenness centrality is \textit{co-betweenness centrality}.
Co-betweenness centrality is defined as the number of shortest paths passing through all vertices in the set.
Kolaczyk et. al. \cite{jrnl:Kolaczyk} presented an $O(|V(G)|^3)$ time algorithm for
co-betweenness centrality computation of sets of size 2.
Chehreghani \cite{conf:cbcwsdm} presented efficient algorithms for co-betweenness centrality computation
of any set or sequence of vertices in weighted and unweighted networks.
% For example, he showed that co-betweenness centrality of a set $K$ of vertices can be computed
% in $O(|V(G)| |E(G)| - |K| |E(G)|+ |V(G)||K| \log|K|-|K|^2 \log |K|)$ time for unweighted graphs.
Puzis et. al. \cite{jrnl:PuzisPhysRev} proposed an $O(|K|^3)$ time algorithm for 
computing successive group betweenness centrality,
where $|K|$ is the size of the set.
The same authors in \cite{jrnl:PuzisAIComm} presented two algorithms
for finding \textit{most prominent group}.
A \textit{most prominent group} of a network is a set vertices of minimum size,
so that every shortest path in the network passes through at least one of the vertices in the set.
The first algorithm is based on a heuristic search and
the second one is based on iterative greedy choice of vertices.
% Dolev et. al. \cite{jrnl:Dolev} defined the Routing Betweenness Centrality (RBC) measure
% and presented algorithms for computing RBC of vertices in the network and algorithms
% for computing group RBC of sets or sequences of vertices. 

\subsection{Approximate algorithms}

% In recent years, several algorithms have been developed for estimating betweenness scores
% in large networks.
% Most of the algorithms fit within
% the following generic scheme.
% While the stopping criteria is not satisfied:
% \begin{itemize}
% \item A vertex $s$ is chosen (or two vertices $s$ and $t$ are chosen) using some probability distribution,
% \item For each vertex $v\in V(G) \setminus \{s\}$ (respectively for each vertex $v\in V(G) \setminus \{s,t\}$),
% $\delta_{s\bullet}(v)$
% (respectively $\delta_{st}(v)$) is calculated,
% \item The calculated scores are scaled/aggregated by some method.
% \end{itemize}
% 
% 
% Note that in order to have unbiased estimations,
% the scaling/aggregation method should correspond to
% the probability distribution used for sampling. 
% Various algorithms differ in either the probability distribution they use, or
% the stopping criteria, or both.
% In the rest of this section, we review some well-known algorithms.

Brandes and Pich
\cite{jrnl:Brandes3} proposed an approximate algorithm based on
selecting $k$ source vertices and
computing dependency scores of them on the other vertices in the graph.
They used various strategies for selecting the source vertices, including: 
MaxMin, MaxSum and MinSum \cite{jrnl:Brandes3}.
In the method of \cite{proc:Bader}, some source vertices are selected uniformly at random,
and their dependency scores are computed and scaled for all vertices.
% Their sampling technique is adaptive in the sense that
% the number of samples varies based on the betweenness score.
Geisberger et. al. \cite{conf:Geisberger} presented an algorithm
for approximate ranking of vertices based on their betweenness scores.
In this algorithm, the method for aggregating dependency
scores changes so that vertices do not profit from being near the selected source vertices.
Chehreghani \cite{DBLP:journals/cj/Chehreghani14} 
proposed a randomized framework for
unbiased estimation of the betweenness score of a single vertex.
% His framework can be adapted with various sampling techniques.
% He discussed the conditions a promising sampling technique should satisfy
% to minimize the approximation error.
Then, 
to estimate betweenness score of vertex $v$,
he proposed a non-uniform sampler, defined as follows:
% \begin{equation*}
$\mathbb P[s] = \frac{d(v,s)}{\sum_{u \in V(G)\setminus \{v\}} d(v,u)}$,
% \end{equation*}
where $s \in V(G) \setminus \{v\}$.

Riondato and Kornaropoulos \cite{Riondato2016}
presented shortest path samplers for estimating betweenness centrality of
all vertices or the $k$ vertices 
that have the highest betweenness scores in a graph.
They determined the number of samples
needed to approximate the betweenness with the desired accuracy
and confidence by means of the VC-dimension theory \cite{vc-ucrfep-71}.
Recently, Riondato and Upfal \cite{Riondato2016} introduced algorithms 
for estimating betweenness scores of all vertices in a graph.
% Their algorithm can handle static and dynamic graphs with edge deletions and insertions.
They also discussed a variant of the algorithm that finds the top-$k$ vertices.
They used Rademacher average \cite{Shalev-Shwartz:2014:UML:2621980} to determine the number
of required samples.
Finally, Borassi and Natale \cite{DBLP:conf/esa/BorassiN16} presented the KADABRA algorithm,
which uses balanced bidirectional BFS (bb-BFS) to sample shortest paths.
In bb-BFS, a BFS is performed from each of the two endpoints $s$ and $t$,
in such a way that they are likely
to explore about the same number of edges.
% and when the two BFSs touch each other, the procedure is stopped.

\subsection{Dynamic graphs}
Lee et. al. \cite{proc:www}
proposed an algorithm to efficiently update betweenness centralities of
vertices when the graph obtains a new edge.
They reduced the search space by finding a candidate set of 
vertices whose betweenness centralities
can be updated.
% Then, they proposed a method to compute betweenness 
% centralities using candidate vertices only.
Bergamini et. al. \cite{DBLP:conf/alenex/BergaminiMS15} presented approximate algorithms
that update betweenness scores 
of all vertices when an edge is inserted or 
the weight of an edge decreases.
They used the algorithm of \cite{Riondato2016} as the building block.
Hayashi et. al. \cite{DBLP:journals/pvldb/HayashiAY15}
proposed a fully dynamic algorithm for estimating betweenness centrality of all vertices in a large
dynamic network.
Their algorithm is based on two data structures: {\em hypergraph sketch} that
keeps track of SPDs,
and {\em two-ball index} that helps
to identify the parts of hypergraph sketches that require updates.

\section{MCMC Algorithms for estimating betweenness centrality} 
\label{sec:algorithm}

In this section, we present our MCMC sampler for
estimating betweenness score of a single vertex; and
our joint-space MCMC sampler for
estimating relative betweenness scores of vertices in a given set.
% Then, we analyze our MCMC sampler and show that it can give
% an $(\epsilon,\delta)$-approximation of relative betweenness scores.
% Finally,
% we present the \textsf{BRES} algorithm that can estimate the {\em ratios of betweenness scores}
% and the {\em relative betweenness scores}.

% of a set $R$ of vertices.
% Our algorithm consists of three steps:
% \textit{i}) describing exact betweenness centrality in the form of a probability distribution,
% \textit{ii}) sampling from the probability distribution, and
% iii) computing the normalization constant of the probability distribution using the samples.
% In the following, we describe each step in details.

\subsection{Betweenness centrality as a probability distribution}

Chehreghani \cite{DBLP:journals/cj/Chehreghani14}
presented a randomized algorithm that admits a probability mass function
as an input parameter.
Then, he proposed an optimal sampling technique
that computes betweenness score of a vertex $r \in V(G)$ with error $0$.
In optimal sampling, each vertex $v$ is chosen with probability
\begin{equation}
\label{eq:optimalsampling}
\mathbb P_r\left[v\right]=\frac{\delta_{v\bullet(r)}}{\sum_{v' \in V(G)} \delta_{v'\bullet(r)}} 
\end{equation}
In other words, for estimating betweenness score of vertex $r$,
each source vertex $v \in V(G) $ whose dependency score on $r$ is greater than $0$,
is chosen with probability $\mathbb P\left[v\right]$
defined in Equation~\ref{eq:optimalsampling}.
% Then, betweenness score of $r$ can be exactly computed
% as $\frac{\delta_{v\bullet(r)}}{\mathbb P\left[v\right]}$ \cite{DBLP:journals/cj/Chehreghani14}.

In the current paper, for $r \in V(G)$ we want to estimate
$BC(r)$ and also for all pairs of vertices
$r_i,r_j$ in a set $R  \subset V(G)$,
the ratios $\frac{BC(r_i)}{BC(r_j)}$.
% (and the relative
% betweenness scores of the vertices in $R$).
For this purpose, we follow a {\em source vertex sampling} procedure where for each vertex $r$, we consider 
$\mathbb P_{r} \left[\cdot \right]$ defined in Equation~\ref{eq:optimalsampling}
as the target probability distribution used to sample vertices $v\in V(G)$.
It is, however, computationally expensive to calculate
the normalization constant $\sum_{v' \in V(G)} \delta_{v'\bullet(r)}$ in
Equation~\ref{eq:optimalsampling},
as it gives the betweenness score of $r$.
% and its computation requires 
% $O(nm)$ time in unweighted graphs
% and $O(nm+n^2\log n)$ time in weighted graphs with positive weights.
However, for two vertices $v_1, v_2 \in V(G)$,
it might be feasible to compute the ratio
$\frac{\mathbb P_{r} \left[v_1\right]}{\mathbb P_{r} \left[v_2\right]}=
\frac{\delta_{v_1\bullet}(r)}{\delta_{v_2\bullet}(r)}$,
as it can be done in $O(|E(G)|)$ time for unweighted graphs
and in $O(|E(G)|+|V(G)|\log |V(G)|)$ time for weighted graphs with positive weights. 
This motivates us to propose Metropolis-Hastings
sampling algorithms that for a vertex $r$,
sample each vertex $v \in V(G)$
with the probability distribution $\mathbb P_{r}\left[v\right]$
defined in Equation~\ref{eq:optimalsampling}.

% In the current paper, we want to compute the relative
% betweenness scores of the vertices in a set $R \subseteq V(G)$
% (and also the ratio $\frac{BC(r_i)}{BC(r_j)}$ for all pairs of vertices
% $r_i,r_j\in R$).
% To do so, we first consider 
% $\mathbb P \left[r_i\right]$ defined in Equation~\ref{eq:optimalsampling}
% as the target distribution for betweenness
% centrality computation of each vertex $r_i \in R$.
% Then, 
% for a pair $r_i \in R$ and $v \in V(G)$,
% we define the joint probability distribution
% \begin{align}
% \label{eq:optimaljointsampling}
% \mathbb P\left[r_i,v\right] &=
% \mathbb P\left[X_r=r_i \text{ and } X_v=v\right] \nonumber \\
% &=\mathbb P\left[X_r=r_i\right] \times \mathbb P\left[X_v=v \mid X_r=r_i \right] \nonumber \\
% &=\frac{\delta_{v\bullet(r_i)}}{|R|\sum_{v'} \delta_{v'\bullet(r)}} 
% \end{align}
% where $\mathbb P\left[X_r=r_i\right]$
% is the probability of choosing an $r_i$ from $R$.  

\subsection{A single-space MCMC sampler}
\label{sec:mcmcsampler0}

In this section, we propose 
a MCMC sampler, defined on the space $V(G)$,
to estimate betweenness centrality of a single vertex $r$.
Our MCMC sampler consists of the following steps:
\begin{itemize}
\item 
First, we choose a vertex $\mathfrak v_0 \in V(G)$,
as the initial state,
uniformly at random. 
\item
Then, at each iteration $t$, $1 \leq t \leq T$:
\begin{itemize}
\item
Let $\mathfrak v(t)$ be the current state of the chain.
\item
We choose vertex $\mathfrak v'(t) \in V(G)$,
uniformly at random.
\item
With probability
\begin{equation}
\min \left\{ 1,\frac{\delta_{\mathfrak v'(t)\bullet}(r)}{\delta_{\mathfrak v(t) \bullet}(r)} \right\} 
\end{equation}
we move from state $\mathfrak v(t)$ to the state
$\mathfrak v'(t)$.
\end{itemize}
\end{itemize}

The sampler is an iterative procedure where at each iteration $t$,
one transition may occur in the Markov chain.
Let $M$ be the multi-set
(i.e., the set where repeated members
are allowed)
of samples (states) accepted by our sampler.
In the end of sampling, betweenness score of $r$ is estimated as
% \begin{equation}
% \ddddot{BC}(r)=\frac{|V(G)|}{T+1}\sum_{v \in M}\sum_{u \in V(G)\setminus\{v\}}\frac{\sigma_{vu}(r)}{\sigma_{vu}}.
% \end{equation}
\begin{equation}
\ddddot{BC}(r)=\frac{1}{(T+1)(|V(G)|-1)}\sum_{v \in M}\sum_{u \in V(G)\setminus\{v\}}\frac{\sigma_{vu}(r)}{\sigma_{vu}}.
\end{equation}

This estimation does not give an unbiased estimation of $BC(r)$,
however as we discuss below,
by increasing $T$,
$\ddddot{BC}(r)$ can be arbitrarily close to $BC(r)$. 
In the rest of this section, we show that our MCMC sampler provides
an $(\epsilon,\delta)$-approximation of $BC(r)$,
where $\epsilon \in \mathbb R^+$ and $\delta \in (0,1)$.
Our proof is based on a theorem presented
in \cite{Latuszynski_Miasojedow_Niemiro_2012} for the analysis of MCMC samples
and a theorem presented in \cite{mh} about the uniformly ergodicity of
Independence Metropolis-Hastings algorithms.
We first present these two theorems.

% First, we define \textit{Doeblin chain}. 
% 
% A \textit{Doeblin chain} is a Monte Carlo Markov Chain 
% whose transition kernel $T$ satisfies the following condition,
% called the \textit{Doeblin condition}:
% there exist $\lambda > 0$ and a probability measure $\varphi$
% such that 
% \begin{equation*}
% T(x,.) \geq \lambda \varphi(.)  \text{,    for every } x \in \mathscr X 
% \end{equation*}

\begin{enumerate}
\item 
Suppose that we have a uniformly Ergodic MCMC with the proposal density $q$,
which is
defined on the space $S$ and satisfies the following condition:
there exist a constant $\lambda > 0$ and a probability measure $\varphi$
such that: 
\begin{equation}
\label{eq:uniformlyergodic}
q(\cdot \mid x) \geq \lambda \times \varphi(\cdot)  \text{,    for every } x \in S.
% g(\cdot) \geq \lambda \varphi(\cdot)  \text{,    for every } x \in S
\end{equation}
Let
$f: S \rightarrow \mathbb R$ be some (Borel measurable) function defined from $S$ to $\mathbb R$,
and $\theta$ be $\frac{1}{|S|}\sum_{x \in S} f(x)$.
We define: \[\parallel f \parallel_{sp} = \sup_{x\in S} f(x) - \inf_{x \in S} f(x).\]
Now, suppose that we draw samples $\left\{x_1,x_2,\ldots,x_n\right\}$ from the MCMC.
Let $\hat{\theta}_n$ be $\frac{1}{n} \sum_{i=1}^n f(x_i)$. 
The authors of \cite{Latuszynski_Miasojedow_Niemiro_2012}
presented the following extension of Hoeffding's inequality
for these MCMC samples:
\begin{align}
\label{eq:expinequalty}
\mathbb P\left[ |\hat{\theta}_n-\theta| > \epsilon \right]  
\leq 2 \exp{ \left\{ - \frac{n-1}{2} 
\left( \frac{2\lambda}{\parallel f \parallel_{sp}}\epsilon -
\frac{3}{n-1} \right)^2  \right\} },
\end{align}
where $\epsilon \in \mathbb R^+ $.
% which is an improvement
% of earlier results such as \cite{Glynn2002143}.

% \begin{equation}
% \label{eq:expinequalty}
% \mathbb P\left[ S_n-\mathbb E\left[S_n\right] > \epsilon \right]  \leq
% \exp{ \left(-
% \frac{\lambda^2 \left(n \epsilon -2 \parallel f \parallel \frac{m}{\lambda} \right)^2}
% {2 n \parallel f \parallel^2 m^2} 
% \right) }
% \end{equation}
% for $n > 2 \parallel f \parallel \frac{m}{\lambda \epsilon}$.

\item
The authors of \cite{mh} showed that
an Independence Metropolis-Hastings algorithm with the
stationary probability distribution $P\left[ \cdot \right]$ and
proposal distribution $g(\cdot)$
is uniformly ergodic if
there exists a constant $\beta$ such that
\begin{equation}
\label{eq:independenceuniformlyergodic}
g(\cdot) \geq \beta \times \mathbb P\left[ \cdot \right].
\end{equation}
In this case,
$\lambda$ and $\varphi(\cdot)$ in Inequation~\ref{eq:uniformlyergodic}
will be $\beta$ and $\mathbb P\left[ \cdot \right]$, respectively
(and of course $q(\cdot \mid x)$ will be $g(\cdot)$).
\end{enumerate}

Now, for a vertex $r \in V(G)$, 
% in Theorem~\ref{theorem:ourerrorbound},
we use Inequalities~\ref{eq:uniformlyergodic}, \ref{eq:expinequalty}
and \ref{eq:independenceuniformlyergodic}
to derive an error bound for $\ddddot{BC}(r)$.

\begin{theorem}
\label{theorem:ourerrorbound}
Let
% $S(j)$ be the multi-set of samples whose $\mathfrak r$ components are $r_j$
$\overline{\delta(r)}$ be
% $ \frac{\sum_{v\in V(G)\setminus R}\delta_{v\bullet}(r_i)}{|(V(G))\setminus R|}.$
the average of dependency scores of vertices 
in $V(G)$ on $r$, i.e.,
\[ \overline{\delta(r)}= \frac{\sum_{v\in V(G)}\delta_{v\bullet}(r)}{|(V(G)|}.\]
Suppose that 
there exists some value $\mu(r)$ such that 
for each vertex $v \in V(G)$, the following holds:
\begin{equation}
\label{eq:averagebcbound}
\delta_{v\bullet}(r) \leq \mu(r) \times \overline{\delta(r)}.
\end{equation}
% for each vertex $r \in R$
Then, 
for a given $\epsilon \in \mathbb R^+$,
by our MCMC sampler
and starting from any arbitrary initial state, we have
\begin{align}
\label{eq:errorbound}
\mathbb P\left[ |\ddddot{BC}(r)-BC(r)| > \epsilon \right] 
\leq 2 \exp{ \left\{ - \frac{T}{2} 
\left( \frac{2\epsilon}{\mu(r)} -
\frac{3}{T} \right)^2  \right\} }.
\end{align}
\end{theorem}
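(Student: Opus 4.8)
The plan is to recognise the sampler as an Independence Metropolis--Hastings chain and feed it, together with a suitable test function, into the two results recalled above: exhibit the minorisation constant, choose $f$ so that its running average along the chain is exactly $\ddddot{BC}(r)$ while its uniform average over $V(G)$ equals $BC(r)$, bound $\parallel f \parallel_{sp}$, and finally substitute into Inequality~\ref{eq:expinequalty}. For the first step, observe that the proposal picks $\mathfrak v'(t)$ uniformly from $V(G)$, so $q(\cdot\mid\mathfrak v)=g(\cdot)$ with $g(v)=1/|V(G)|$, and the acceptance probability $\min\{1,\delta_{\mathfrak v'(t)\bullet}(r)/\delta_{\mathfrak v(t)\bullet}(r)\}$ is exactly the Metropolis--Hastings ratio for the target $\mathbb P_r[\cdot]$ of Equation~\ref{eq:optimalsampling} with this uniform proposal; hence the chain is an Independence Metropolis--Hastings chain with stationary distribution $\mathbb P_r[\cdot]$. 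To invoke criterion~\ref{eq:independenceuniformlyergodic} I need a constant $\beta$ with $g(v)\ge\beta\,\mathbb P_r[v]$ for every $v$, i.e. $\beta\le\overline{\delta(r)}/\delta_{v\bullet}(r)$ for every $v$; hypothesis~\ref{eq:averagebcbound} gives $\delta_{v\bullet}(r)\le\mu(r)\overline{\delta(r)}$, so $\beta=1/\mu(r)$ is admissible, the chain is uniformly ergodic, and in Inequality~\ref{eq:uniformlyergodic} we may take $\lambda=1/\mu(r)$ and $\varphi(\cdot)=\mathbb P_r[\cdot]$.

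Next I take $S=V(G)$ and $f(v)=\frac{1}{|V(G)|-1}\sum_{u\in V(G)\setminus\{v\}}\frac{\sigma_{vu}(r)}{\sigma_{vu}}=\frac{\delta_{v\bullet}(r)}{|V(G)|-1}$, where the $u=r$ term is harmless since $\sigma_{vr}(r)=0$. With $n=|M|=T+1$ the empirical mean $\frac1n\sum_{v\in M}f(v)$ is exactly $\ddddot{BC}(r)$, while $\theta=\frac1{|V(G)|}\sum_{v\in V(G)}f(v)=\frac1{|V(G)|(|V(G)|-1)}\sum_{v}\delta_{v\bullet}(r)=BC(r)$ (the $v=r$ summand vanishes). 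Moreover each ratio $\sigma_{vu}(r)/\sigma_{vu}\in[0,1]$ and the inner sum ranges over at most $|V(G)|-2$ vertices, so $0\le\delta_{v\bullet}(r)\le|V(G)|-2$ and hence $\parallel f \parallel_{sp}\le\frac{|V(G)|-2}{|V(G)|-1}<1$.

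Finally I substitute $\lambda=1/\mu(r)$, $\parallel f \parallel_{sp}\le1$ and $n-1=T$ into Inequality~\ref{eq:expinequalty}. Since $\frac{2\lambda}{\parallel f \parallel_{sp}}\ge 2\lambda=\frac{2}{\mu(r)}$ and $x\mapsto e^{-x^2}$ is decreasing on $x\ge0$,
\[
\mathbb P\left[\,|\ddddot{BC}(r)-BC(r)|>\epsilon\,\right]\le 2\exp\left\{-\frac{T}{2}\left(\frac{2\lambda}{\parallel f \parallel_{sp}}\epsilon-\frac{3}{T}\right)^{2}\right\}\le 2\exp\left\{-\frac{T}{2}\left(\frac{2\epsilon}{\mu(r)}-\frac{3}{T}\right)^{2}\right\},
\]
which is Inequality~\ref{eq:errorbound}.

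Given the two quoted theorems the argument is mostly bookkeeping; the genuinely load-bearing point is reading the minorisation constant off as exactly $1/\mu(r)$ from the \emph{averaged} bound~\ref{eq:averagebcbound}. The only other care needed is in the monotone substitution of the last display: it is legitimate when $\frac{2\epsilon}{\mu(r)}\ge\frac{3}{T}$, which is the regime of interest (outside it one simply enlarges $T$, exactly as one must anyway to obtain an $(\epsilon,\delta)$-approximation), and one must also remember that $\ddddot{BC}(r)$ is biased for finite $T$, so that it matters that Inequality~\ref{eq:expinequalty} already absorbs this discrepancy through the additive $3/(n-1)$ term rather than through a separate argument.
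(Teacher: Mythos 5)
Your proposal is correct and follows essentially the same route as the paper: identify the chain as an Independence Metropolis--Hastings sampler with uniform proposal and target $\mathbb P_r[\cdot]$, read off $\beta=\lambda=1/\mu(r)$ from Inequality~\ref{eq:averagebcbound}, choose $f(v)=\frac{1}{|V(G)|-1}\sum_{u\neq v}\sigma_{vu}(r)/\sigma_{vu}$ so that $\theta=BC(r)$ and $\hat\theta_{T+1}=\ddddot{BC}(r)$, and substitute into Inequality~\ref{eq:expinequalty}. If anything you are slightly more careful than the paper, which asserts $\parallel f \parallel_{sp}=1$ outright, whereas you correctly note $\parallel f \parallel_{sp}\le\frac{|V(G)|-2}{|V(G)|-1}<1$ and justify replacing it by $1$ via monotonicity in the regime $\frac{2\epsilon}{\mu(r)}\ge\frac{3}{T}$.
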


\begin{proof}
Our MCMC sampler is an 
Independence Metropolis Hastings MCMC whose proposal density is $\frac{1}{|V(G)|}$.
% The proof is done based on Inequation \ref{eq:expinequalty}.
Inequation \ref{eq:averagebcbound} yields that
\[\frac{\delta_{v\bullet}(r)}{\overline{\delta(r)} } =
\frac{|V(G)| \cdot \delta_{v\bullet}(r)}{\sum_{v\in V(G)}\delta_{v\bullet}(r)}
\leq \mu(r),\]
which yields 
\begin{equation}
\frac{1}{|V(G)|}  \geq \frac{1}{\mu(r)} \times
\frac{\delta_{v\bullet}(r)}{\sum_{v\in V(G)}\delta_{v\bullet}(r)}.
% \frac{\frac{1}{n}}{\frac{\delta_{v\bullet}(r)}{BC(r)}} \geq \frac{1}{\mu(r)}
\end{equation}

Hence, our MCMC sampler
satisfies Inequation~\ref{eq:independenceuniformlyergodic},
where the proposal density is
$\frac{1}{|V(G)|}$,
% $\frac{1}{n}$,
% $x$ is $v$,
$\mathbb P \left[ \cdot \right]$ is $\mathbb P_{r}\left[v\right]$
defined in Equation~\ref{eq:optimalsampling}
and $\beta$ is $\frac{1}{\mu(r)}$.
Now, if we apply Inequation~\ref{eq:expinequalty}
with:
$n = T+1$,
$S =V(G)$,
$x$ a vertex $v$ in $V(G)$ and
$f(v) =
\frac{1}{|V(G)|-1} \sum_{u\in V(G)\setminus \{v\}} \frac{\sigma_{vu}(r)}{\sigma_{vu}}$,
we get:
$\parallel f \parallel_{sp} = 1$,
$\theta = \frac{1}{|V(G)|} \sum_{v \in V(G)} f(v) = BC(r)$,
$\hat{\theta}_n = \frac{1}{T+1} \sum_{v \in M} f(v) = 
\ddddot{BC}(r)$,
% \frac{1}{\left(T+1\right)\cdot \left(|V(G)|-1\right)} \ddddot{BC}(r)$,
and
$\lambda = \frac{1}{\mu(r)}$.
Putting these values into Inequation~\ref{eq:expinequalty},
we get Inequation~\ref{eq:errorbound}. 
\end{proof}

Note that since Inequation~\ref{eq:expinequalty} holds
for any arbitrary initial distribution, 
Inequation~\ref{eq:errorbound} does not depend on the initial state.
Furthermore, since in Inequation~\ref{eq:expinequalty}
it is not required to discard an initial part of the chain, called {\em burn-in},
Inequation~\ref{eq:errorbound} holds without need for burn-in. 
$T$ is usually large enough so that we can approximate $\frac{3}{T}$ by $0$. 
Hence, Inequation~\ref{eq:errorbound} yields that
for given values $\epsilon \in \mathbb{R}^+$ and $\delta \in (0,1)$,
if $T$ is chosen such that
\begin{equation}
\label{eq:epsilondeltaT}
T  \geq  \frac{{\mu(r)}^2}{2\epsilon^2}   \ln{\frac{2}{\delta}} 
\end{equation}
then, our MCMC sampler can estimate the betweenness score of $r$ 
within an additive error $\epsilon$ with probability $\delta$.
% The special case of $\mu(r_i)=\sqrt{2}\epsilon$

Similar to the
error bounds presented in e.g., \cite{Riondato2016}, \cite{RiondatoKDD20116} and \cite{DBLP:conf/esa/BorassiN16},
worst case time complexity of processing each sample in our algorithm is $O(|E(G)|)$
for unweighted graphs
(and $O(|V(G)| \log |V(G)| + |E(G)|)$ for weighted graphs with positive weights).
The number of samples required by our algorithm to estimate betweenness score of vertex $r$
depends on the parameter $\mu(r)$.
The value of $\mu(r)$ depends on the {\em position} of $r$ in the graph.
% It is not possible to theoretically analyze the value of $\mu$ 
% for any vertex in general graphs.
In the rest of this section, we show that in several cases
(where $r$ is an {\em important} vertex in the graph),
$\mu(r)$ is a constant and as a result,
our MCMC sampler can give an $(\epsilon,\delta)$-approximation of betweenness score of $r$
using only a constant number of samples.
% However, in these cases the number of samples required by the existing algorithms
% is not a constant.

\begin{theorem}
\label{theorem:constantmu}
Let $G$ be a graph, $r \in V(G)$ 
and $C=\{C_1,\ldots,C_l\}$ be the set $G \setminus r$.
For an $i \in [1 .. l]$,
let $V_i$ refer to the sum of the number of vertices of the graphs in the set $C \setminus C_i$.
If for all $i \in [1 .. l]$ we have: $V_i =\Theta(|V(G)|)$,
then $\mu(r)$ is a constant. 
\end{theorem}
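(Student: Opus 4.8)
The plan is to exhibit a constant upper bound for the ratio $\delta_{v\bullet}(r)/\overline{\delta(r)}$ that is uniform in $v \in V(G)$; any such bound is an admissible value of $\mu(r)$ in Inequation~\ref{eq:averagebcbound}. Throughout, write $n = |V(G)|$ and let $n_i = |V(C_i)|$, so that $\sum_{i=1}^{l} n_i = n - 1$ and $V_i = (n-1) - n_i$. The numerator is immediate: $\delta_{v\bullet}(r) = \sum_{t \in V(G)\setminus\{v,r\}} \sigma_{vt}(r)/\sigma_{vt}$ is a sum of at most $n-2$ terms, each in $[0,1]$, so $\delta_{v\bullet}(r) < n$ for every $v$ and hence $\max_{v} \delta_{v\bullet}(r) < n$.

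The work is in lower-bounding $\overline{\delta(r)}$. First I would record the structural fact that $r$ separates the components $C_1,\dots,C_l$ from one another: if $v \in C_i$ and $t \notin C_i \cup \{r\}$, then removing $r$ disconnects $v$ from $t$, so $r$ lies on every $v$--$t$ path, in particular on every shortest one, giving $\sigma_{vt}(r)/\sigma_{vt} = 1$. Therefore, for any $v \in C_i$ we have $\delta_{v\bullet}(r) \ge V_i$, and summing over all $v \ne r$ yields $\sum_{v \in V(G)} \delta_{v\bullet}(r) \ge \sum_{i=1}^{l} n_i V_i$. Now invoke the hypothesis $V_i = \Theta(|V(G)|)$: it supplies a constant $c > 0$, independent of $n$, with $V_i \ge c\, n$ for all $i$, whence $\sum_{v} \delta_{v\bullet}(r) \ge c\, n \sum_i n_i = c\, n (n-1)$ and so $\overline{\delta(r)} \ge c(n-1)$. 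Combining the two estimates gives $\delta_{v\bullet}(r)/\overline{\delta(r)} < n / (c(n-1)) \le 2/c$ for $n \ge 2$, so $\mu(r) = 2/c$ is a constant satisfying the required inequality.

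The only point I expect to need care is the separation claim, and there it is worth stressing that it holds for shortest paths precisely because it already holds for \emph{all} paths (removing the cut vertex $r$ disconnects the two endpoints). I would also handle the degenerate situations at the outset: if $l = 1$ then $V_1 = 0$, which is not $\Theta(n)$ for large $n$, so the hypothesis forces $l \ge 2$ and, more strongly, forbids any single component from containing all but a sublinear fraction of the vertices; and for $n$ bounded by any fixed constant the ratio is trivially bounded, so we may assume $n$ large. The remaining manipulations are just arithmetic with the identities $\sum_i n_i = n-1$ and $V_i = n-1-n_i$.
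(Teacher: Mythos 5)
Your proof is correct, but it takes a genuinely different (and more careful) route than the paper's. The paper fixes the vertex $m$ maximizing $\delta_{m\bullet}(r)$ (say $m\in C_1$), asserts the exact identities $\delta_{m\bullet}(r)=V_1$ and $\overline{\delta(r)}$ equal (up to the $1/|V(G)|$ normalization) to the cross-component count $\sum_{i=1}^{l}|V(C_i)|\,V_i$, on the grounds that no shortest path between two vertices of the same component passes through $r$, and then bounds the resulting ratio by writing $V_j=K_j\,V_1$ and taking the smallest constant $K$, arriving at $\mu(r)\le 1+\frac{1}{K}$. That same-component claim is not literally true in general (a shortest path between two vertices of $C_i$ may route through $r$, e.g.\ on a cycle closed through $r$), so the paper's equalities are really only one-sided; your argument never needs them. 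You instead use the trivial uniform upper bound $\delta_{v\bullet}(r)\le n-2<n$ for the numerator, and use the cut-vertex separation property only as a lower bound on the denominator: each $v\in C_i$ has $\delta_{vt}(r)=1$ for all $t$ outside $C_i\cup\{r\}$, so $\delta_{v\bullet}(r)\ge V_i\ge c\,n$, giving $\overline{\delta(r)}\ge c(n-1)$ and hence $\mu(r)\le 2/c$, a valid choice in Inequation~\ref{eq:averagebcbound}. What the paper's computation buys, in the cases where its shortest-path assertions do hold, is a sharper constant; what your version buys is rigor and generality, since the only structural fact invoked (that $r$ lies on every path, hence every shortest path, between distinct components of $G\setminus r$) holds unconditionally, and you also make explicit the degenerate cases ($l=1$ being excluded by the hypothesis, and small $n$ being trivial) that the paper leaves implicit.
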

\begin{proof}
Let
$m$ be a vertex in $V(G) \setminus \{r\}$ such that for each vertex $v \in V(G) \setminus \{r\}$,
the following holds: $\delta_{m\bullet}(r) \geq \delta_{v\bullet}(r)$.
In order to prove that $\mu(r)$ is a constant, we need to prove that
$\frac{\delta_{m\bullet}(r)}{\overline{\delta(r)}}$ is a constant.
% To do so, we show that 
% $\frac{\overline{\delta(r)}}{\delta_{m\bullet}(r)}$ 
% is always a constant and this is independent of the size of $V(G)$.
Without loss of generality, assume that $m$ belongs to $C_1$.
On the one hand, 
all shortest paths between $m$ and all vertices in $C_2,\ldots,C_l$ pass through $r$
and none of the shortest paths between $m$ and any vertex in $C_1$ pass through $r$.
Hence, dependency score of $m$ on $r$ is $|V(C_2)|+\ldots+|V(C_l)|$.
On the other hand,
all shortest paths between $v_1 \in V(C_i)$ and $v_2 \in V(C_j)$ so that $i \neq j$
pass through $r$
and none of the shortest paths 
between $v_3 \in V(C_i)$ and $v_4 \in V(C_j)$ so that $i = j$ pass through $r$.
Hence, we have:
$\overline{\delta(r)}=\sum_{i=1}^l \left( |V(C_i)| \cdot \sum_{j \in \{1,\ldots,l\}\setminus \{i\}} |V(C_j)| \right)$.
As a result:
\begin{align}
\frac{\delta_{m\bullet}(r)}{\overline{\delta(r)}} =
\frac
{|V(G)| \cdot V_1}
{\sum_{i=1}^l \left( |V(C_i)| \cdot V_i \right)} \nonumber \\ 
\leq 1 + \frac{V_1 \cdot \sum_{i=1}^l V(C_i)}{\sum_{i=1}^l \left( V(C_i) \cdot V_i \right)}. \label{proof:constantmu1} 
\end{align}
Since for all $i \in [1 .. l]$, $V_i=\Theta(|V(G)|)$, 
therefore for any $j \in [2 .. l]$ we have: $V_j=K_j \cdot V_1$,
where $K_j$ is a constant.
Let $K$ be the minimum of such constants.
Inequality~\ref{proof:constantmu1} yields
\begin{align}
\frac{\delta_{m\bullet}(r)}{\overline{\delta(r)}}
\leq 1 + \frac{V_1 \cdot \sum_{i=1}^l V(C_i)}{K \cdot V_1 \sum_{i=1}^l V(C_i) }
= 1+ \frac{1}{K}. 
\end{align}
Therefore, $\frac{\delta_{m\bullet}(r)}{\overline{\delta(r)}}$ is a constant.
\end{proof}

Theorem~\ref{theorem:constantmu} proposes the general conditions under which $\mu(r)$
is a constant.
In several cases (where $r$ is an important vertex of the graph),
Theorem~\ref{theorem:constantmu} holds and as a result, $\mu(r)$ is a constant.
A simple example is where $r$ a {\em balanced vertex separator} of $G$.
Vertex $x$ is a {\em vertex separator} for $G$,
if there exist vertices $u$ and $v$ that belong to distinct components of $G \setminus x$,
or if $G \setminus x$ contains less than two vertices.
Vertex separator $x$ is {\em balanced}, if at least two components in $G \setminus x$ contains
$\Theta(|V(G)|)$ vertices\footnote{Note that in order to cover more cases,
balanced vertex separator defined here is more general than its
tradition definitional in graph theory.
}.

\subsection{A joint-space MCMC sampler}
\label{sec:mcmcsampler}

In this section, 
we present a MCMC sampler to estimate the ratios of betweenness scores of
the vertices in a set $R \subset V(G)$.
% we simulate the optimal probability distributions using an MCMC algorithm.
Each state of this sampler
is a pair $(\mathfrak r,\mathfrak v)$,
where $\mathfrak r\in R$ and $\mathfrak v \in V(G)$.
Since this sampler is defined on the joint space $R$ and $V(G)$,
we refer to it as {\em joint-space MCMC sampler}.
Given a state $\mathfrak s$ of the chain,
we denote by $\mathfrak{s.r}$ the first element of $\mathfrak s$,
which is a vertex in $R$;
and by $\mathfrak{s.v}$ the second element of $\mathfrak s$,
which is a vertex in $V(G)$.

Our joint-space MCMC sampler consists of the following steps:
\begin{itemize}
\item 
First, we choose a pair $\left\langle \mathfrak r_0,\mathfrak v_0\right\rangle$,
as the initial state,
where $\mathfrak r_0$ and $\mathfrak v_0$ are chosen uniformly at random from 
$R$ and $V(G)$, respectively. 

\item
Then, at each iteration $t$, $1 \leq t \leq T$:
\begin{itemize}
\item
Let $\mathfrak s(t)$ be the current state of the chain.
\item
We choose elements $\mathfrak r(t) \in R$ and $\mathfrak v(t) \in V(G)$,
uniformly at random.
\item
With probability
\begin{equation}
\min \left\{ 1,\frac{\delta_{\mathfrak v(t)\bullet}(\mathfrak r(t))}{\delta_{\mathfrak s(t).\mathfrak{v} \bullet}(\mathfrak s(t).\mathfrak{r})} \right\} 
\end{equation}
we move from state $\mathfrak s(t)$ to the state
$\left\langle \mathfrak r(t),\mathfrak v(t)\right\rangle$.

% \item
% We choose a vertex $v(t) \in V(G)$ uniformly at random.
% \item
% Let $s'(t)$ be the current state of the chain.
% With probability
% \begin{equation}
% \min \left\{ 1,\frac{\delta_{v(t)\bullet}(s'(t).r)}{\delta_{s'(t).v\bullet}(s'(t).r)} \right\} 
% \end{equation}
% we move from state $s'(t)$ to state $(s'(t).r,v(t))$.
\end{itemize}
\end{itemize}

% The second transaction keeps
% the $r$ element of the current state and
% may change its $v$ element.
% As a result, in order to move from a state $s_1$
% to another state $s_2$ such that $s_1.v \neq s_2.v$
% and $s_1.r \neq s_2.r$, at least two movements in the Markov chain
% (one iteration) is required.
Techniques similar to our joint-space MCMC sampler have been used in statistical physics
to estimate free energy differences \cite{bennett76}.
Our joint-space MCMC sampler is a Metropolis-Hastings algorithm that
possess a unique stationary distribution \cite{Meyn1993,Gilks96}
defined as follows:
\begin{equation}
\label{eq:completeoptimalprobability}
\mathbb P \left[r,v\right]=
\frac{\delta_{v\bullet}(r)}{\sum_{r'\in R} \sum_{v'\in V(G)}\delta_{v'\bullet}(r')}.
\end{equation}
All samples that have a specific value $r$ for their $\mathfrak{r}$ component form
an Independence Metropolis-Hastings chain that possess the stationary distribution
defined in Equation~\ref{eq:optimalsampling}.
Samples drawn by our MCMC and joint-space MCMC samplers are non-iid.
% This is unlike
% the iid sampling
% algorithms \cite{proc:Bader}, \cite{Riondato2016} and \cite{RiondatoKDD20116}
% that do not employ the optimal sampling of betweenness estimation.
In Theorem~\ref{prop:expectedvalue},
we show how our joint-space MCMC sampler can be used to estimate
the ratios of betweenness scores of the vertices in $R$.

% \begin{equation}
% \label{eq:stationary}
% \mathbb P_{r}\left[v\right]=
% \frac{\delta_{v\bullet}(r')}{\sum_{v\in V(G)\setminus R}\delta_{v\bullet}(r_i)} 
% \end{equation}

% 
% Our MCMC sampler is an iterative method where at each iteration $t$,
% one move in the markov chain may occur.
% % The first movement keeps the $v$ element of the current state and
% % may change its $r$ element.
% The move may change either the $r$ component or
% the $v$ component or both $r$ and $v$ components
% of the current state.
% Since our MCMC sampler is an Independence Metropolis-Hastings sampler,
% it automatically possess a unique stationary distribution \cite{Meyn1993,Gilks96},
% where the stationary distribution is as follows:
% \begin{equation}
% \label{eq:stationary}
% \mathbb P\left[r,v\right]= \frac{\delta_{v\bullet}(r)}{\sum_{r' \in R}BC(r')} 
% \end{equation}
% % where 
% % $\delta_{v\bullet}(r)$, denoted by $\mathbb Q\left[r,v\right]$, is the  
% % and $\sum_{r' \in R}BC(r')$ is the normalization constant.
% 
% % The second movement keeps
% % the $r$ element of the current state and
% % may change its $v$ element.
% % As a result, in order to move from a state $s_1$
% % to another state $s_2$ such that $s_1.v \neq s_2.v$
% % and $s_1.r \neq s_2.r$, at least two movemets in the markov chain
% % (one iteration) is required.

\begin{theorem}
\label{prop:expectedvalue}
In our joint-space MCMC sampler,
for any two vertices $r_i,r_j \in R$,
% and $v \in V(G) $ 
we have:
\begin{align}
\label{eq:proof1_4}
\frac{BC(r_i)}{BC(r_j)} =
\frac{\mathbb E_{\mathbb P_{r_j}\left[v\right]}
\left[\min \left\{ 1,\frac{\delta_{v\bullet}(r_i)}{\delta_{v\bullet}(r_j)}\right\} \right] } 
{\mathbb E_{\mathbb P_{r_i}\left[v\right]}
\left[\min\left\{ 1,\frac{\delta_{v\bullet}(r_j)}{\delta_{v\bullet}(r_i)}\right\} \right] }
\end{align}
where $\mathbb E_{\mathbb P_{r_i} \left[v\right]}$
(respectively $\mathbb E_{\mathbb P_{r_j} \left[v\right]}$)
denotes the expected value with respect to $\mathbb P_{r_i}\left[v\right]$
(respectively $\mathbb P_{r_j}\left[v\right]$).
\end{theorem}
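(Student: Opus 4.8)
The plan is to reduce both sides of Equation~\ref{eq:proof1_4} to one and the same symmetric quantity by a direct computation, using only the explicit form of the target distribution $\mathbb P_{r}[\cdot]$ from Equation~\ref{eq:optimalsampling}. First I would introduce the shorthand $Z_r = \sum_{v \in V(G)} \delta_{v\bullet}(r)$, so that by the definition of betweenness centrality $BC(r) = Z_r / \bigl(|V(G)|\,(|V(G)|-1)\bigr)$, and hence $BC(r_i)/BC(r_j) = Z_{r_i}/Z_{r_j}$. It therefore suffices to show that the right-hand side of Equation~\ref{eq:proof1_4} also equals $Z_{r_i}/Z_{r_j}$.

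Next I would expand the numerator. Since $\mathbb P_{r_j}[v] = \delta_{v\bullet}(r_j)/Z_{r_j}$,
\begin{align*}
\mathbb E_{\mathbb P_{r_j}[v]}\left[\min\left\{1, \frac{\delta_{v\bullet}(r_i)}{\delta_{v\bullet}(r_j)}\right\}\right]
&= \sum_{v \in V(G)} \mathbb P_{r_j}[v]\,\min\left\{1, \frac{\delta_{v\bullet}(r_i)}{\delta_{v\bullet}(r_j)}\right\} \\
&= \frac{1}{Z_{r_j}} \sum_{v \in V(G)} \delta_{v\bullet}(r_j)\,\min\left\{1, \frac{\delta_{v\bullet}(r_i)}{\delta_{v\bullet}(r_j)}\right\}.
\end{align*}
The crux is the elementary identity $a \cdot \min\{1, b/a\} = \min\{a, b\}$, valid for any $a > 0$ and $b \ge 0$; for the (finitely many) vertices with $\delta_{v\bullet}(r_j) = 0$ the corresponding summand already contributes $0$ because $\mathbb P_{r_j}[v]=0$. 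Hence, term by term, the numerator equals $\tfrac{1}{Z_{r_j}} \sum_{v} \min\{\delta_{v\bullet}(r_i), \delta_{v\bullet}(r_j)\}$. An entirely symmetric computation shows that the denominator of Equation~\ref{eq:proof1_4} equals $\tfrac{1}{Z_{r_i}} \sum_{v} \min\{\delta_{v\bullet}(r_i), \delta_{v\bullet}(r_j)\}$; since $\min\{\cdot,\cdot\}$ is symmetric in its arguments, the two sums are literally the same number.

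Dividing, the common factor $\sum_{v} \min\{\delta_{v\bullet}(r_i), \delta_{v\bullet}(r_j)\}$ cancels, leaving exactly $Z_{r_i}/Z_{r_j} = BC(r_i)/BC(r_j)$, which is the claim. I do not expect a serious obstacle here: once the $\min$-identity is isolated, the argument is a short algebraic manipulation. The only points needing a little care are to note that the cancelled quantity is nonzero and finite — i.e.\ at least one vertex $v$ satisfies $\delta_{v\bullet}(r_i) > 0$ and $\delta_{v\bullet}(r_j) > 0$, which holds whenever $BC(r_i)$ and $BC(r_j)$ are themselves nonzero, the only regime in which the ratio is meaningful — and to dispatch cleanly the terms with a vanishing denominator inside the $\min$, as indicated above.
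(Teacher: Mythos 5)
Your proof is correct and, at its core, it is the same computation as the paper's. The paper obtains the pointwise identity $\delta_{v\bullet}(r_i)\,\min\left\{1,\frac{\delta_{v\bullet}(r_j)}{\delta_{v\bullet}(r_i)}\right\}=\delta_{v\bullet}(r_j)\,\min\left\{1,\frac{\delta_{v\bullet}(r_i)}{\delta_{v\bullet}(r_j)}\right\}$ by appealing to detailed balance of the joint-space chain and then sums over $v$ and normalizes; your identity $a\,\min\{1,b/a\}=\min\{a,b\}$ is exactly that equation, derived algebraically, followed by the same summation and cancellation. What your route buys is self-containedness: the statement is a purely analytic identity about $\mathbb P_{r_i}[\cdot]$ and $\mathbb P_{r_j}[\cdot]$, and you never need the sampler or its stationarity, and you are also more careful than the paper about the normalizing constants and about vertices with $\delta_{v\bullet}(r_j)=0$.

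One caveat: your side claim that the cancelled quantity $\sum_{v}\min\{\delta_{v\bullet}(r_i),\delta_{v\bullet}(r_j)\}$ is nonzero whenever $BC(r_i)$ and $BC(r_j)$ are nonzero is false. The two dependency supports can be disjoint even when both scores are positive: take two $4$-cycles $a\,r_i\,b\,x$ and $c\,r_j\,d\,y$ joined by the single edge $\{x,y\}$; then $\delta_{v\bullet}(r_i)>0$ only for $v\in\{a,b\}$ and $\delta_{v\bullet}(r_j)>0$ only for $v\in\{c,d\}$, so both expectations in Equation~\ref{eq:proof1_4} vanish and the right-hand side is the indeterminate $0/0$ while the left-hand side is a well-defined positive number. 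The correct sufficient condition is that some single vertex has positive dependency on both $r_i$ and $r_j$. To be fair, the paper's own proof silently divides by the same possibly-zero sum, so this degenerate case is a gap in the theorem as stated rather than a defect specific to your argument, but your stated justification for non-vanishing should be repaired accordingly.
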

\begin{proof}
% On the one hand,
% \begin{equation}
% \label{eq:proof1_1}
% \frac{\Pr(r=r_i)}{\Pr(r=r_j)} = \frac{BC(r_i)}{BC(r_j)} 
% \end{equation}
% On the other hand,

In our joint-space MCMC sampler,
% for any element $r' \in R$,
consider all the transitions
that keep the $\mathfrak v$ element, but change the $\mathfrak r$ element.
% betweens states $(r_i,w)$ and $(r_j,w)$.
According to the \textit{detailed balance property} of 
Metropolis-Hastings algorithms,
for any $v \in V(G)$, we have
\begin{align}
\label{eq:proof1_2}
\mathbb P\left[r_i,v\right] \times TR(\left\langle r_j,v\right\rangle \mid \left\langle r_i,v \right\rangle) &= 
\mathbb P\left[r_j,v\right] \times TR(\left\langle r_i,v\right\rangle \mid \left\langle r_j,v \right\rangle)
\end{align}
Using Equation~\ref{eq:completeoptimalprobability}, we get
\begin{align}
\label{eq:proof1_2}
  \delta_{v\bullet}(r_i) \times
  \min\left\{ 1,\frac{\delta_{v\bullet}(r_j)}{\delta_{v\bullet}(r_i)} \right\} &=  
 \delta_{v\bullet}(r_j) \times
  \min \left\{ 1,\frac{\delta_{v\bullet}(r_i)}{\delta_{v\bullet}(r_j)} \right\}
\end{align}

Equation~\ref{eq:proof1_2} holds for all $v \in V(G)$.
If we sum both sides of Equation~\ref{eq:proof1_2} with respect to $v \in V(G)$
and then, divide/multiply by $BC(r_i)$ and $BC(r_j)$, we get
\begin{align}
\label{eq:proof1_3}
BC(r_i) \sum_{v\in V(G)} 
\left( \underbrace{\frac{\delta_{v\bullet}(r_i)}{BC(r_i)}}_{\mathbb P_{r_i}\left[ v\right]} \times
  \min \left\{ 1,\frac{\delta_{v \bullet}(r_j)}{\delta_{v\bullet}(r_i)} \right\} \right)  &= \nonumber  
BC(r_j) \sum_{v \in V(G)} 
\left( \underbrace{\frac{\delta_{v\bullet}(r_j)}{BC(r_j)}}_{\mathbb P_{r_j}\left[v\right]} \times
  \min \left\{ 1,\frac{\delta_{v\bullet}(r_i)}{\delta_{v\bullet}(r_j)} \right\} \right)
\end{align}
which yields Equation~\ref{eq:proof1_4}.  
\end{proof}

Let $r_i,r_j \in R$, and $M(i)$ and $M(j)$ be the multi-sets
of samples taken by our joint-space MCMC sampler whose $\mathfrak r$ components are respectively $r_i$ and $r_j$.
Equation~\ref{eq:proof1_4} suggests to estimate $\frac{BC(r_i)}{BC(r_j)}$ as the ratio:
\begin{equation}
\label{eq:unbiasedrelativebetweenness}
% \frac{BC(r_i)}{BC(r_j)}=
\frac
{\frac{1}{|M(j)|} \times \sum_{\mathfrak s \in M(j)} \min \left\{1,\frac{\delta_{\mathfrak s.\mathfrak v}(r_i)}{\delta_{\mathfrak s.\mathfrak v}(r_j)}\right\} }
{\frac{1}{|M(i)|} \times \sum_{\mathfrak s \in M(i)} \min \left\{1,\frac{\delta_{\mathfrak s.\mathfrak v}(r_j)}{\delta_{\mathfrak s.\mathfrak v}(r_i)}\right\} }.
\end{equation}

% Let $r_i,r_j \in R$.
We use Equation~\ref{eq:unbiasedrelativebetweenness}
to estimate the \textit{ratio of the betweenness scores} of $r_i$ and $r_j$.
We then define the \textit{relative betweenness score} of $r_i$ respect to $r_j$,
denoted by $BC_{r_j}(r_i)$,
as follows:
\begin{equation}
\label{eq:relativebetweenness}
BC_{r_j}(r_i)=\frac{1}{|V(G)|}
\sum_{v \in V(G)} \min \left\{ 1,\frac{\delta_{v\bullet}(r_i)}{\delta_{v\bullet}(r_j)} \right\}.
\end{equation}
When we want to compare betweenness centrality of vertices $r_i$ and $r_j$,
using {\em relative betweenness score} makes more sense than using
the {\em ratio of betweenness scores}.
In relative betweenness centrality, for each $v \in V(G)$,
the ratio of the dependency scores of $v$ on $r_i$ and $r_j$ is computed and in the end,
all the ratios are summed.
Hence, for each vertex $v$ independent from the others,
the effects of $r_i$ and $r_j$ on the shortest paths starting from $v$
are examined\footnote{Note that
the notion of {\em relative betweenness score} can be further extended and presented as follows:
$BC_{r_j}(r_i)=\frac{1}{|V(G)|\cdot\left(|V(G)|-1\right)}
\sum_{v \in V(G)}\sum_{t \in V(G)\setminus \{v\}} \min \left\{ 1,\frac{\delta_{vt}(r_i)}{\delta_{vt}(r_j)} \right\}$.}.
% of a vertex $r_i$
% respect to some other vertex $r_j$ can be seen as a generalization of its betweenness score.
% In relative betweenness score, for each $w \in V(G)$,
% the ratio of the dependecy scores of $w$ on $r_i$ and $r_j$ is computed and in the end,
% all the ratios are summed. In betweenness score and in the absence of $r_j$,
% the ratios $\frac{\delta_{w\bullet}(r_i)}{\delta_{w\bullet}(r_j)}$
In the following, we show that 
the numerator of Equation~\ref{eq:unbiasedrelativebetweenness},
i.e.,
\[\frac{1}{|M(j)|} \sum_{\mathfrak s \in M(j)} \min \left\{1,\frac{\delta_{\mathfrak s.\mathfrak v}(r_i)}{\delta_{\mathfrak s.\mathfrak v}(r_j)}\right\},\]
can accurately estimate $BC_{r_j}(r_i)$.
We refer to the estimated value of $BC_{r_j}(r_i)$ as
$\ddddot{BC}_{r_j}(r_i)$.
% the \textit{relative betweenness} of vertex $r_i$ with respect to vertex $r_j$
% and denote it by $BC_{r_j}(r_i)$.
% Obviousely, the denominator of Equation~\ref{eq:relativebetweenness}
% is the \textit{relative betweenness} of vertex $r_j$ with respect to vertex $r_i$.
% In this section, we show that our MCMC sampler provides
% an $(\epsilon,\delta)$-approximation, $\epsilon , \delta \in (0,1)$, 
% of the relative betweenness scores of the vertices in $R$.
Our proof is similar to the analysis we gave in Section~\ref{sec:mcmcsampler0}
and is based on the theorems presented
in \cite{Latuszynski_Miasojedow_Niemiro_2012} and \cite{mh}.
For a pair of vertices $r_i,r_j \in R$, 
in Theorem~\ref{theorem:ourerrorbound2},
we use Inequalities~\ref{eq:uniformlyergodic}, \ref{eq:expinequalty}
and \ref{eq:independenceuniformlyergodic}
to derive an error bound for $\ddddot{BC}_{r_j}(r_i)$.

\begin{theorem}
\label{theorem:ourerrorbound2}
Let $r_i,r_j \in R$,
$M(j)$ be the multi-set of samples whose $\mathfrak r$ components are $r_j$,
and $\overline{\delta(r_j)}$ be
% $ \frac{\sum_{v\in V(G)\setminus R}\delta_{v\bullet}(r_i)}{|(V(G))\setminus R|}.$
the average of dependency scores of vertices 
in $V(G)$ on $r_j$, i.e.,
\[ \overline{\delta(r_j)}= \frac{\sum_{v\in V(G)}\delta_{v\bullet}(r_j)}{|(V(G)|}.\]
Suppose that 
there exists some value $\mu(r_j)$ such that 
for each vertex $v \in V(G)$, the following holds:
\begin{equation}
\label{eq:averagebcbound2}
\delta_{v\bullet}(r_j) \leq \mu(r_j) \times \overline{\delta(r_j)}.
\end{equation}
% for each vertex $r \in R$
Then, 
for a given $\epsilon \in \mathbb R^+$,
by our joint-space MCMC sampler
and starting from any arbitrary initial state, we have
\begin{align}
\label{eq:errorbound2}
\mathbb P\left[ |\ddddot{BC}_{r_j}(r_i)-BC_{r_j}(r_i)| > \epsilon \right] 
\leq 2 \exp{ \left\{ - \frac{|M(j)|-1}{2} 
\left( \frac{2\epsilon}{\mu(r_i)} -
\frac{3}{|M(j)|-1} \right)^2  \right\} }.
\end{align}
\end{theorem}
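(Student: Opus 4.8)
The plan is to reduce this statement to the single-vertex analysis of Theorem~\ref{theorem:ourerrorbound} by isolating the right sub-chain. Restrict the joint-space chain to the subsequence of states whose $\mathfrak r$ component equals $r_j$; as noted right after Theorem~\ref{prop:expectedvalue}, this subsequence is itself an Independence Metropolis--Hastings chain on $V(G)$ with proposal density $1/|V(G)|$ and unique stationary distribution $\mathbb P_{r_j}[\cdot]$ of Equation~\ref{eq:optimalsampling} (taken with $r=r_j$): conditioned on being in a state with $\mathfrak r=r_j$, one step proposes a vertex uniformly from $V(G)$ and accepts it with probability $\min\{1,\delta_{\mathfrak v(t)\bullet}(r_j)/\delta_{\mathfrak s(t).\mathfrak v\bullet}(r_j)\}$, i.e. it is exactly the single-space sampler of Section~\ref{sec:mcmcsampler0} run for the vertex $r_j$. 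So the entire toolkit assembled there transfers.

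First I would establish uniform ergodicity of this sub-chain. Its proposal density is $g(v)=1/|V(G)|$, and the hypothesis, Inequality~\ref{eq:averagebcbound2}, i.e. $\delta_{v\bullet}(r_j)\le\mu(r_j)\,\overline{\delta(r_j)}$, gives --- by the same one-line manipulation as in the proof of Theorem~\ref{theorem:ourerrorbound} --- the bound $1/|V(G)|\ge (1/\mu(r_j))\,\mathbb P_{r_j}[v]$ for every $v\in V(G)$. Hence Inequality~\ref{eq:independenceuniformlyergodic} holds with $\beta=1/\mu(r_j)$, so Inequality~\ref{eq:uniformlyergodic} holds with $\lambda=1/\mu(r_j)$ and $\varphi(\cdot)=\mathbb P_{r_j}[\cdot]$.

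Next I would apply the MCMC Hoeffding-type bound, Inequality~\ref{eq:expinequalty}, on the state space $S=V(G)$ with $n=|M(j)|$ and test function $f(v)=\min\{1,\delta_{v\bullet}(r_i)/\delta_{v\bullet}(r_j)\}$. Since $f$ is $[0,1]$-valued, $\|f\|_{sp}\le1$; by Equation~\ref{eq:relativebetweenness}, $\theta=\frac{1}{|V(G)|}\sum_{v\in V(G)}f(v)=BC_{r_j}(r_i)$; and by construction $\hat\theta_n=\frac{1}{|M(j)|}\sum_{\mathfrak s\in M(j)}f(\mathfrak{s.v})=\ddddot{BC}_{r_j}(r_i)$, the numerator of Equation~\ref{eq:unbiasedrelativebetweenness}. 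Substituting $\lambda=1/\mu(r_j)$ and $\|f\|_{sp}\le1$ into Inequality~\ref{eq:expinequalty} gives Inequality~\ref{eq:errorbound2}; as in the single-vertex case this is valid for any initial state and needs no burn-in, and it remains valid after conditioning on the (random) realized value of $|M(j)|$, since Inequality~\ref{eq:expinequalty} is stated for every sample count and every initial distribution.

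The step I expect to be the real obstacle is the very first reduction: justifying rigorously that the ``$r_j$-subsequence'' of the joint chain is genuinely an Independence Metropolis--Hastings chain with stationary law $\mathbb P_{r_j}[\cdot]$ --- that marginalizing out the moves which alter the $\mathfrak r$ component leaves a Markov kernel of the required form with the correct invariant measure. A clean route is to observe that the joint chain is reversible with respect to $\mathbb P[r,v]$ of Equation~\ref{eq:completeoptimalprobability} (this is precisely the detailed-balance computation used in the proof of Theorem~\ref{prop:expectedvalue}), from which the conditional law of $\mathfrak v$ given $\mathfrak r=r_j$ is $\mathbb P_{r_j}[\cdot]$ and the kernel induced on the slice $\{\mathfrak r=r_j\}$ inherits both the Independence-MH structure and reversibility; the remaining bookkeeping --- the randomness of $|M(j)|$ and the harmless $0/0$ convention for vertices with $\delta_{v\bullet}(r_j)=0$ --- is routine.
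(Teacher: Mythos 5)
Your proposal follows essentially the same route as the paper's own proof: reduce to the Independence Metropolis--Hastings chain formed by the samples in $M(j)$ (the paper does this by invoking the view of the joint sampler as $|R|$ parallel samplers, citing \cite{bennett76}, where you argue it via restriction to the $\mathfrak r=r_j$ slice and reversibility), verify Inequality~\ref{eq:independenceuniformlyergodic} with $\beta=1/\mu(r_j)$ from the hypothesis, and apply Inequality~\ref{eq:expinequalty} with $n=|M(j)|$, $S=V(G)$, $f(v)=\min\{1,\delta_{v\bullet}(r_i)/\delta_{v\bullet}(r_j)\}$, $\|f\|_{sp}=1$, $\theta=BC_{r_j}(r_i)$ and $\hat\theta_n=\ddddot{BC}_{r_j}(r_i)$. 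The only difference is that you explicitly flag (and sketch a fix for) the delicate point of justifying the slice reduction, which the paper passes over by citation; otherwise the arguments coincide.
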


\begin{proof}
As discussed in \cite{bennett76},
our joint-space MCMC sampler can be implemented as $|R|$ parallel
MCMC samplers, each defined on the state space $V(G)$.
In this way, samples whose $\mathfrak r$ components are $r_j$
(i.e., samples that are in $M(j)$) are drawn from an 
Independence Metropolis Hastings MCMC whose proposal density is $\frac{1}{|V(G)|}$.
% The proof is done based on Inequation \ref{eq:expinequalty}.
Inequation \ref{eq:averagebcbound} yields that
\[\frac{\delta_{v\bullet}(r_j)}{\overline{\delta(r_j)} } =
\frac{|V(G)| \cdot \delta_{v\bullet}(r_j)}{\sum_{v\in V(G)}\delta_{v\bullet}(r_j)}
\leq \mu(r_j),\]
which yields 
\begin{equation}
\frac{1}{|V(G)|}  \geq \frac{1}{\mu(r_j)} \times
\frac{\delta_{v\bullet}(r_j)}{\sum_{v\in V(G)}\delta_{v\bullet}(r_j)}.
% \frac{\frac{1}{n}}{\frac{\delta_{v\bullet}(r)}{BC(r)}} \geq \frac{1}{\mu(r_i)}
\end{equation}

Hence, the MCMC formed by the samples in $M(j)$
satisfies Inequation~\ref{eq:independenceuniformlyergodic},
where the proposal density is
$\frac{1}{|V(G)|}$,
% $\frac{1}{n}$,
% $x$ is $v$,
$\mathbb P \left[ \cdot \right]$ is $\mathbb P_{r_j}\left[v\right]$
defined in Equation~\ref{eq:optimalsampling}
and $\beta$ is $\frac{1}{\mu(r_j)}$.
Now, if we apply Inequation~\ref{eq:expinequalty}
on the MCMC formed by the samples in $M(j)$
with:
$n = |M(j)|$, 
$S=V(G)$, and 
$f(v)=
\min \left\{1, \frac{\delta_{v\bullet}(r_i)}{\delta_{v\bullet}(r_j)} \right\}$,
we get:
$\theta = \frac{1}{|V(G)|} \sum_{v\in V(G)} f(v) = BC_{r_j}(r_i)$,
$\hat{\theta}_n = \frac{1}{|M(j)|} \sum_{\mathfrak s \in M(j)} f(\mathfrak s.\mathfrak v) =  \ddddot{BC}_{r_j}(r_i)$,
$\parallel f \parallel_{sp} = 1$, and $\lambda = \frac{1}{\mu(r_j)}$.
Putting these values into Inequation~\ref{eq:expinequalty}, we get Inequation~\ref{eq:errorbound2}. 
\end{proof}

Similar to Inequation~\ref{eq:errorbound},
Inequation~\ref{eq:errorbound2} does not depend on the initial state and
it holds without need for burn-in. 
Furthermore,
for given values $\epsilon \in \mathbb R^+$ and $\delta \in (0,1)$, if we have
\begin{equation}
\label{eq:epsilondeltaT}
|M(j)|  \geq  \frac{{\mu(r_j)}^2}{2\epsilon^2}   \ln{\frac{2}{\delta}} 
\end{equation}
then, our joint-space MCMC sampler can estimate relative betweenness score of $r_i$ respect to $r_j$  
within an additive error $\epsilon$ with probability $\delta$.
Finally, similar to our discussion in Section~\ref{sec:mcmcsampler0},
in several cases
$\mu(r_j)$ is a constant and as a result, our joint-space MCMC sampler
can give an $(\epsilon,\delta)$-approximation of
relative betweenness score of $r_i$ respect to $r_j$
using only a constant number of samples.

\section{Conclusion and future work}
\label{sec:conclusion}

In this paper, first
given a network $G$ and a vertex $r \in V(G)$,
we proposed a Metropolis-Hastings MCMC algorithm that samples from the
space $V(G)$ and estimates betweenness score of $r$.
We showed that our MCMC sampler provides an 
$(\epsilon,\delta)$-approximation, where the number of
required samples depends on the position of $r$ in $G$
and in several cases, it is a constant.
Then,
given a network $G$ and a set $R \subset V(G)$,
we presented a Metropolis-Hastings MCMC sampler that samples from the
joint space $R$ and $V(G)$ and estimates relative betweenness scores of the vertices in $R$.
We showed that for any pair $r_i, r_j \in R$,
the ratio of the expected values of the estimated relative betweenness scores
of $r_i$ and $r_j$ respect to each other is equal to
the ratio of their betweenness scores.
We also showed that our joint-space MCMC sampler provides an 
$(\epsilon,\delta)$-approximation
of the relative betweenness score of $r_i$ respect to $r_j$,
where the number of required samples depends on the position of $r_j$ in $G$
and in many cases, it is a constant.

The sampling techniques presented in this paper have similarity to
the techniques used in statistical physics to estimate e.g., energy differences.
Our current paper takes the first step in bridging these two domains.
% and
% shows how a technique used in statistical physics can be adapted to estimate betweenness scores.
This step can be further extended by proposing algorithms similar to our work that estimate other network indices.
As a result, a novel family of techniques might be introduced into the field of network analysis,
that are used to estimate different indices.
% Currently, we are working on an MCMC sampler for estimating closeness centrality.
% Another interesting direction for future work is to develop a parallel algorithm, based on our MCMC sampler,
% that runs independent and parallel MCMCs for each $r \in R$. 

% \bibliographystyle{spbasic}      % basic style, author-year citations
% \bibliographystyle{spmpsci}      % mathematics and physical sciences
%\bibliographystyle{spphys}       % APS-like style for physics
% \bibliographystyle{plain}
% \bibliographystyle{abbrv}
\bibliography{allpapers}   % name your BibTeX data base

\end{document}